\definecolor{blu3}{rgb}{.1,.0,.4}
\newtheorem{theorem}{Theorem}
\newtheorem{lemma}[theorem]{Lemma}
\DeclareMathOperator{\area}{area}
\title{Finding a Largest-Area Triangle in a Terrain\\ in Near-Linear Time\thanks{A preliminary version of this work was presented at WADS 2021~\cite{CabelloDDM21}.}}
\author{Sergio Cabello\thanks{Faculty of Mathematics and Physics, 
			University of Ljubljana, Slovenia, and Institute of Mathematics,
			Physics and Mechanics, Slovenia. \texttt{sergio.cabello@fmf.uni-lj.si}.} 
		\and
		Arun Kumar Das\thanks{Advanced Computing and Microelectronics Unit, Indian Statistical Institute, India. \texttt{arund426@gmail.com} and
		\texttt{sandipdas@isical.ac.in}.}\saveFN\nbfootnoterepeated\
		\and
		Sandip Das\useFN\nbfootnoterepeated
		\and 
		Joydeep Mukherjee\thanks{Dept. of Computer Science, Ramakrishna Mission Vivekananda Educational and Research Institute, India. 
			\texttt{joydeep.m1981@gmail.com}.}
}
\begin{document}

\maketitle

\begin{abstract}
    A terrain is an $x$-monotone polygon whose lower boundary is a single line segment. We present an algorithm to find in a terrain a triangle of largest area in $O(n \log n)$ time, where $n$ is the number of vertices defining the terrain. The best previous algorithm for this problem has a running time of $O(n^2)$.\\ 
	
	\smallskip
	\textbf{Keywords:} terrain; inclusion problem; geometric optimization; hereditary segment tree.
\end{abstract}

\section{Introduction}
An \emph{inclusion problem} asks to find a geometric object inside a given polygon that is optimal with respect to a certain parameter of interest. This parameter can be the area, the perimeter or any other measure of the inner object that plays a role in the application at hand. Several variants of the inclusion problem come up depending on the parameter to optimize, the constraints imposed in the sought object, as well as the assumptions we can make about the containing polygon. For example, computing a largest-area or largest-perimeter convex polygon inside a given polygon is quite well studied \cite{CabelloCKSV17,Goodman1981,holt2006}. A significant amount of work has also been done on computing largest-area triangle inside a given polygon \cite{boyce1982,chandran1992,dobkin1979,mellissaratos1992}. In the last few years, there have been new efficient algorithms for the problems of finding a largest-area triangle~\cite{hoog2020,kallus2017}, a largest-area or a largest-perimeter rectangle~\cite{cabello2016}, and a largest-area quadrilateral \cite{rote2019} inside a given convex polygon. In this paper, we propose a deterministic $O(n \log n)$-time algorithm to find a largest-area triangle inside a given terrain, which improves the best known running time of $O(n^2)$, presented in \cite{DasDM2021}. These problems find applications in stock cutting \cite{Chang1986}, robot motion planning \cite{toth2017}, occlusion culling \cite{holt2006} and many other domains of facility location and operational research. 
In general, geometric optimization problems, such as \emph{inclusion, enclosure, packing, covering, and location-allocation}, often are considered in the area of Operations Research \cite{Altay2023,Dearing2023,Korf2010}. 

A polygon $P$ is \emph{$x$-monotone} if it has no vertical edge and each vertical line intersects $P$ in an interval, which may be empty. An $x$-monotone polygon has a unique vertex with locally minimum $x$-coordinate, that is, a vertex whose two adjacent vertices have larger $x$-coordinate; see for example~\cite[Lemma 3.4]{BergCKO08}. Similarly, it has a unique vertex with locally maximum $x$-coordinate. 
    If we split the boundary of an $x$-monotone polygon at the unique vertices with maximum and minimum $x$-coordinate, we get the \emph{upper boundary} and the \emph{lower boundary} of the polygon. Each vertical line intersects each of those boundaries at most once.
    
    A \emph{terrain} is an $x$-monotone polygon whose lower boundary is a single line segment, called the \emph{base} of the terrain. The upper boundary of the terrain connects the endpoints of the base and lies above the base: each vertical ray from the base upwards intersects the upper boundary at exactly one point. Figure \ref{fig:terrain_1} shows two examples of terrains.
    
    \begin{figure}[tb]
		\centering
		\includegraphics[width=\textwidth,page=1]{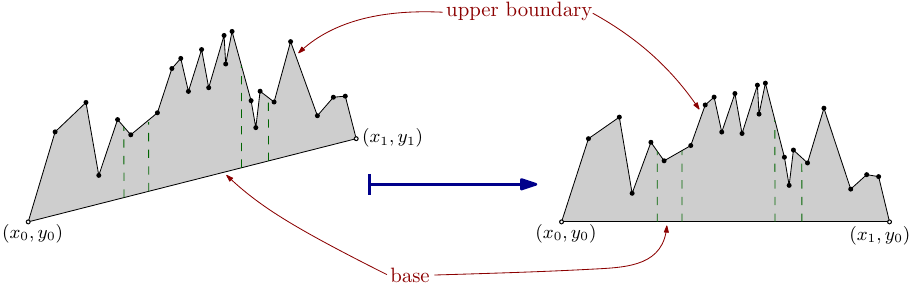}
		\caption{Two terrains. The right one is obtained from the left one by a shear transformation to make the base horizontal}
		\label{fig:terrain_1}
	\end{figure}
    
    In this work, we focus on the problem of finding a triangle of largest area inside a terrain. We will show that when the terrain has $n$ vertices, such a largest-area triangle can be computed in $O(n\log n)$ time. This is an improvement over the algorithm of Das et al.~\cite{DasDM2021}, which has a running time of $O(n^2)$. It should be noted that we compute a single triangle with largest area, even if there are more optimal solutions.
    
    To obtain our new algorithm we build on the approach and geometric insights of~\cite{DasDM2021}. More precisely, in that work, there is a single type of optimal solution that takes $O(n^2)$ time, while all the other cases can be handled in $O(n\log n)$ time. We show that the remaining case also can be solved in $O(n\log n)$ time combining shortest path trees in polygons~\cite{GuibasHLST87}, hereditary segment trees~\cite{Chazelle1994}, search for row maxima in monotone matrices~\cite{aggarwal1987}, and additional geometric insights.
    
    Our new time bound, $O(n \log n)$, is a significant improvement over the best previous result. Nevertheless, it could be that the problem is solvable in linear time. We only know that the problem cannot be solved in sublinear time because we need to scan all the vertices of the polygon. Indeed, any vertex of the terrain that is not scanned could be arbitrarily high and be the top vertex of a triangle with arbitrarily large area. Even if we assume that, say, the base and the highest vertex is also specified with the input, or the vertex that is furthest vertically from the base, it could be that the second highest vertex is the top vertex of the triangle of largest area. We leave closing this gap between $O(n\log n)$ and $\Omega(n)$ as an interesting open problem for future research. 
    
\section{Preliminaries}
    Without loss of generality, we will assume that \emph{the base of the terrain is horizontal}. The general case reduces to this one, as follows. If the endpoints of the base are $(x_0,y_0)$ and $(x_1,y_1)$ with $x_0\neq x_1$, then the shear mapping
    $(x,\,y)\mapsto \bigl( x,\, y-(x-x_0)\tfrac{y_1-y_0}{x_1-x_0}\bigr)$ 
    transforms the base to the horizontal segment connecting $(x_0,y_0)$ to $(x_1,y_0)$. Since the mapping also transforms each vertical segment to a vertical segment, the terrain gets mapped to a terrain with a horizontal base. See Figure~\ref{fig:terrain_1}. 
    Because the determinant of the Jacobian matrix of this affine transformation is $1$, the area of any measurable region of the plane is invariant under the transformation. Therefore, it suffices to find the triangle of largest area in the resulting polygon.
    
    For simplicity in the description of the algorithm, we will assume the following general position: \emph{no three vertices in the terrain are collinear}. This property is also invariant under shear transformations. The assumption can be lifted using simulation of simplicity~\cite{EdelsbrunnerM90}. More precisely, we can assume that each vertex $v_i=(x_i,y_i)$ is replaced by a vertex $v'_i=(x_i,y_i+\epsilon^i)$ for a sufficiently small $\epsilon>0$. These transformations break all collinearities if $\epsilon$ is sufficiently small. The replacement is not actually performed, but simulated. More precisely, whenever the vertices $v_i$, $v_j$ and $v_k$ are collinear, which means that
        \[
        \left| 
            \begin{matrix}
                1 & ~x_i~ & y_i\\
                1 & ~x_j~ & y_j\\
                1 & ~x_k~ & y_k
            \end{matrix} 
        \right| ~=~ 0,
        \]
the relative position of $v'_i$, $v'_j$ and $v'_k$, after the replacements, is given by the sign of the determinant
    \begin{align*}
        \left| 
            \begin{matrix}
                1 & ~x_i~ & y_i+\epsilon^i\\
                1 & ~x_j~ & y_j+\epsilon^j\\
                1 & ~x_k~ & y_k+\epsilon^k
            \end{matrix}
        \right| ~&=~ 
        \left| 
            \begin{matrix}
                1 & ~x_i~ & y_i\\
                1 & ~x_j~ & y_j\\
                1 & ~x_k~ & y_k
            \end{matrix} 
        \right| +
        \left| 
            \begin{matrix}
                1 & ~x_i~ & \epsilon^i\\
                1 & ~x_j~ & \epsilon^j\\
                1 & ~x_k~ & \epsilon^k
            \end{matrix}
        \right| ~=~ 
        \left| 
            \begin{matrix}
                1 & ~x_i~ & \epsilon^i\\
                1 & ~x_j~ & \epsilon^j\\
                1 & ~x_k~ & \epsilon^k
            \end{matrix}
        \right|\\
		&=~ \epsilon^i \cdot \left| \begin{matrix} 1 & ~x_j\\ 1 & ~x_k \end{matrix} \right|
			- \epsilon^j \cdot \left| \begin{matrix} 1 & ~x_i\\ 1 & ~x_k \end{matrix} \right| 
			+ \epsilon^k \cdot \left| \begin{matrix} 1 & ~x_i\\ 1 & ~x_j \end{matrix} \right|\\
		&= \epsilon^i (x_k-x_j) - \epsilon^j (x_k-x_i) + \epsilon^k (x_j-x_i).
	\end{align*}
    For example, if $i< j$ and $i<k$, then $\epsilon^i \gg \epsilon^j$ and $\epsilon^i \gg \epsilon^k$ whenever $\epsilon$ is positive and sufficiently small. Since the $x$-coordinates of vertices of a terrain are distinct, we get in this case, $i<\min \{j,k\}$, that the relative position of $v'_i$, $v'_j$ and $v'_k$ is given by the sign of $x_k - x_j$.
    The other cases are similar.
        
    \begin{figure}[tb]
		\centering
		\includegraphics[page=12]{figures}
		\caption{Notation for vertices of the terrain.}
		\label{fig:terrain_notation}
	\end{figure}

    Finally, let us introduce some notation for the vertices. See Figure~\ref{fig:terrain_notation}.
	A vertex of a terrain is \emph{convex} if the internal angle between the edges incident to this vertex is less than $\pi$ radians. If the angle is greater than $\pi$ radians, then the vertex is \emph{reflex}. Angles of $\pi$ radians do not occur because of our assumption of no $3$ collinear points. The endpoints of the base of the terrain are called \emph{base vertices}. The one with smallest $x$-coordinate is the \emph{left base vertex} and is denoted by $B_\ell$. The base vertex with largest $x$-coordinate is the \emph{right base vertex} and is denoted by $B_r$. The base vertices are convex.

\section{Previous geometric observations}
    In this section, we state several observations and properties given in~\cite{DasDM2021}, without repeating their proofs here. The first one shows that to search for an optimal solution we can restrict our attention to certain types of triangles.
    
    A triangle contained in the terrain with an edge on the base of the terrain is a \emph{grounded triangle}. For a grounded triangle, the vertex not contained in the base of the terrain is the \emph{apex}, and the edges incident to the apex are the \textit{left side} and the \textit{right side}; the right side is incident to the vertex of the base with larger $x$-coordinate. 

\begin{lemma}[Lemmas 1 and 2, Corollary 1 in~\cite{DasDM2021}]
\label{lem:structure1}
    In each terrain there is a largest area triangle $T$ satisfying all of the following properties:
    \begin{itemize}
        \item[\textup(a\textup)] the triangle $T$ is grounded;
        \item[\textup(b\textup)] the apex of $T$ lies on the boundary of the terrain \emph{or} each of the left and right sides of $T$ contains two vertices of the terrain.
    \end{itemize}
\end{lemma}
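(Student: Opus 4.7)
The plan is to start from an arbitrary triangle $T^{*}$ of maximum area contained in the terrain (which exists by compactness, since the area is continuous and the space of triangles contained in the terrain is closed and bounded) and then, by a sequence of local perturbations that do not decrease the area, morph $T^{*}$ into a triangle of the form described in the lemma. The argument has two parts, corresponding to (a) and (b).

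For part (a), I would first record the first-order optimality conditions. At any optimum, every vertex $v$ of $T$ must lie on the boundary of the terrain, for otherwise we could translate $v$ in the direction perpendicular to the opposite edge, strictly increasing the area. Moreover, if $v$ lies in the relative interior of a terrain edge $e$, then $e$ must be parallel to the edge of $T$ opposite $v$; otherwise sliding $v$ along $e$ would increase the area. With these constraints recorded, I would then show that one can always deform $T$, while preserving its area, until two of its vertices lie on the horizontal base. The key sub-case is the one where no vertex of $T$ lies on the base: then all three vertices lie on the upper boundary, and a downward translation of $T$ (combined with the area-preserving slides allowed by the optimality conditions) can be continued until a vertex of $T$ meets the base. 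Continuing the deformation along the base brings a second vertex onto it, producing a grounded triangle of the same area.

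For part (b), assume now that $T$ is grounded, with base vertices $a,b$ on the base and apex $c$. If $c$ already lies on the boundary of the terrain, the first alternative of (b) holds and there is nothing more to do, so suppose $c$ is an interior point. Holding $a,b$ fixed and pushing $c$ vertically upward strictly increases the area, so by optimality this motion must instantaneously push $T$ outside the terrain. Since $c$ itself is interior, the only way this can happen is that one of the sides $ca$, $cb$ is in contact with the upper boundary at a point strictly below $c$; because the boundary is polygonal and the sides of $T$ are straight, this contact point must be a \emph{reflex} vertex of the terrain lying on that side. Applying the same argument to slight tilts of $c$ rather than a pure upward motion forces both the left and the right side of $T$ to contain a reflex terrain vertex. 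To upgrade from one vertex per side to two vertices per side, I would fix the two reflex vertices already found and slide $c$ along the one-parameter family of positions for which both sides remain incident to their respective pinning vertex. The area varies smoothly along this slide, and by optimality it is locally constant; one can therefore slide until a second terrain vertex is picked up on one side, and then, by the same argument restricted to the still-free side, until a second vertex is picked up on the other.

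The main obstacle in carrying this out rigorously is the case analysis for the slide arguments, in particular verifying that the slides terminate in the desired configuration rather than in a degeneracy. Possible degeneracies include the apex $c$ reaching the upper boundary (in which case the first alternative of (b) holds and we are done), an endpoint of the base edge of $T$ sliding to $B_\ell$ or $B_r$, or the sliding parameter running out before a second vertex is captured. Each of these cases must be inspected individually, but in each the conclusion either reduces to a previously handled case or follows from a straightforward variant of the main slide argument.
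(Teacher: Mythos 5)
The paper does not actually prove this lemma: it is imported verbatim from~\cite{DasDM2021} (``we state several observations and properties \dots without repeating their proofs here''), so there is no in-paper proof to compare against and your sketch has to stand on its own. As it stands it has two genuine gaps. First, in part (a) your opening first-order condition --- ``every vertex $v$ of $T$ must lie on the boundary of the terrain, for otherwise we could translate $v$ perpendicular to the opposite edge'' --- is false, and it is contradicted by the very statement you are proving: in the second alternative of (b) the apex is \emph{not} on the boundary, and the perpendicular (upward) motion of the apex is blocked not at the vertex itself but by reflex vertices of the terrain lying on the two incident sides. (Your own part (b) correctly identifies this blocking mechanism, which makes the part (a) claim internally inconsistent.) The same objection applies to the ``vertex in the interior of a terrain edge forces parallelism'' condition. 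With these conditions gone, the actual content of part (a) --- an area-non-decreasing deformation that lands a whole edge on the base, in particular the step ``continuing the deformation along the base brings a second vertex onto it'' --- is asserted but not argued.

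Second, in part (b) the step ``the area varies smoothly along this slide, and by optimality it is locally constant; one can therefore slide until a second terrain vertex is picked up'' is a non sequitur. Optimality only gives a nonpositive directional derivative in each feasible direction, hence a zero derivative when both directions are feasible; that does not make the function locally constant, and local constancy would in any case not let you slide a positive distance. What the argument actually needs --- and what constitutes the geometric core of Lemmas 1--2 of~\cite{DasDM2021} --- is that the area of the grounded triangle, as one side rotates about a \emph{single} pinning vertex (the other side's line held fixed), has \emph{no interior local maximum}: it has a unique interior minimum (attained when the pin bisects the chord) and increases monotonically toward both ends of the feasible interval. Hence the maximum over that interval is attained at an endpoint, i.e.\ exactly when an additional contact is acquired (a second terrain vertex on the side, the apex reaching the upper boundary, or the foot reaching $B_\ell$ or $B_r$). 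Without this convexity-type fact the slide argument does not terminate in the desired configuration. A minor further point: the vertices picked up on a side need not be reflex vertices; as the paper notes after Lemma~\ref{lem:structure1}, one of the two may be a base vertex of the terrain, so restricting attention to reflex pins loses a case.
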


    \begin{figure}[tb]
		\centering
		\includegraphics[width=\textwidth,page=13]{figures}
		\caption{Examples of grounded triangles like those in Lemma~\ref{lem:structure1}. The apex is marked with a square and may lie on the boundary of the terrain (left) or not (right).}
		\label{fig:structure1}
	\end{figure}

    From now on, we restrict our attention to triangles that satisfy the properties of Lemma~\ref{lem:structure1}, and select one with largest area. Note that property (b) splits into two cases; see Figure~\ref{fig:structure1}. An option is that the apex of the grounded triangle is on the boundary of the terrain. The other option is that each of the edges incident to the apex contains two vertices of the terrain; those vertices may be base vertices. 
The first case is already solved in $O(n\log n)$ time; this is the content of the following lemma.

\begin{lemma}[Implicit in~\cite{DasDM2021}; see the paragraph before Theorem~1]
\label{lem:boundary}
    Given a terrain with $n$ vertices, we can find in $O(n\log n)$ time the grounded triangle with largest area that has its apex on the boundary of the terrain.
\end{lemma}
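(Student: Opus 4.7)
The plan is to exploit the horizontality of the base to reduce the task to a one-dimensional optimisation along the upper boundary. Because the base is horizontal, the area of a grounded triangle with apex $a$ at height $h(a)$ above the base and base vertices $(p_{\ell,x},0)$, $(p_{r,x},0)$ equals $\tfrac{1}{2}\,h(a)\,(p_{r,x}-p_{\ell,x})$. By simple connectedness of the terrain, the triangle lies in the terrain iff its three sides do; the base side is trivially inside, so only the two slanted sides $\overline{a p_\ell}$ and $\overline{a p_r}$ need to be checked. For a fixed apex $a$ on the upper boundary, the optimal choice of base vertices is therefore $p_\ell=(L(a),0)$ and $p_r=(R(a),0)$, where $L(a)$ and $R(a)$ are the leftmost and rightmost points of the base that are visible from $a$ through the terrain (both are well defined because the vertical segment from $a$ down to the base always lies in the terrain). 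Our task thus reduces to maximising $h(a)\,(R(a)-L(a))$ as $a$ ranges over the upper boundary.

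The key structural fact I would use is: either $\overline{a B_\ell}$ lies in the terrain and $L(a)=B_{\ell,x}$, or the segment $\overline{a\,(L(a),0)}$ is tangent to the upper boundary from below at a reflex vertex $v^L(a)$, in which case $L(a)$ is the $x$-coordinate of the intersection with the base of the line through $a$ and $v^L(a)$. The vertex $v^L(a)$ coincides with the last bend of the geodesic shortest path from $B_\ell$ to $a$ inside the terrain, a standard property of shortest paths in simple polygons. Therefore the shortest-path map rooted at $B_\ell$ induces a partition of the upper boundary into $O(n)$ maximal arcs on each of which $v^L$ is a fixed reflex vertex (or $L\equiv B_{\ell,x}$); the symmetric statement holds for $R$ with respect to $B_r$. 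Overlaying the two partitions yields $O(n)$ sub-arcs, each contained in a single edge of the upper boundary, on which both $v^L$ and $v^R$ are constant.

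Algorithmically, I would first compute the two shortest-path maps, which takes $O(n\log n)$ time by standard simple-polygon techniques (even $O(n)$ if we exploit the $x$-monotone structure of the terrain to triangulate in linear time). Next, on each of the $O(n)$ sub-arcs I would parametrise $a$ affinely by $t\in[t_0,t_1]$ along its containing edge: then $h(a)$ is linear in $t$, and the elementary intersection formula $X(a,v)=\tfrac{a_x v_y-v_x a_y}{v_y-a_y}$ makes $L(a)$ and $R(a)$ rational functions of $t$ of bounded numerator/denominator degree. The area $\tfrac{1}{2}h(a)(R(a)-L(a))$ is then a bounded-degree rational function of $t$, whose maximum on $[t_0,t_1]$ can be located in $O(1)$ time by solving a constant-degree polynomial equation and comparing with the endpoint values. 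Returning the best of these $O(n)$ candidate values gives the desired triangle in total time $O(n\log n)$.

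The main technical point will be justifying the tangency/last-bend characterisation of $L$ and $R$ --- in particular, that the extension of the last leg $\overline{v^L(a)\,a}$ of the geodesic beyond $v^L(a)$ really does reach the base without leaving the terrain --- and verifying that the induced partitions have size $O(n)$. Both follow from the standard $O(n)$-complexity of shortest-path maps in a simple polygon, but some care is needed to handle degenerate apex positions (for instance when $a$ coincides with a reflex vertex, or when several sub-arcs share an endpoint), which is absorbed into our global general-position assumption. Apart from this structural setup, everything else is routine $O(1)$-per-piece calculus.
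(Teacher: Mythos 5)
Your proposal is correct and follows essentially the same route as the paper, which does not spell out a proof but attributes the result to~\cite{DasDM2021} and summarises the key idea as decomposing the upper boundary into $O(n)$ pieces on which the optimal left and right sides pass through fixed vertices of the terrain. Your shortest-path-map partition (with the last geodesic bend from $B_\ell$ and $B_r$ determining $L(a)$ and $R(a)$) is precisely that decomposition, and the per-piece constant-degree optimisation is the expected finish.
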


    The key insight to obtain Lemma~\ref{lem:boundary} is to decompose the upper boundary of the terrain into $O(n)$ pieces with the following property: for any two points $p,q$ in the same piece of the upper boundary, the largest grounded triangle with apex at $p$ and the largest grounded triangle with apex at $q$ have the same vertices of the terrain on their left sides and the same vertices of the terrain on their right sides. When the piece is not a single point, there is  precisely one vertex on each side of the triangle, and together with the apex they define uniquely a grounded triangle. One can then write a closed formula for the area of the largest grounded triangle when the apex moves along a piece, and find its maximum in $O(1)$ time per piece. We refer to~\cite{DasDM2021} for further details.

    It remains to solve the case where the apex is \emph{not} contained on the boundary of the terrain. This means that each side of the optimal triangle contains two vertices of the terrain. Because the triangle is grounded, there are two options for each of the sides: either both vertices contained in a side are reflex vertices, or one vertex is reflex and the other is a vertex of the base. 

    Recall that $B_\ell$ is the left endpoint of the base of the terrain. Consider the visibility graph of the vertices of the terrain and let $T_\ell$ be the shortest path tree from the vertex $B_\ell$ in the visibility graph. The set of edges of $T_\ell$ are denoted by $E(T_\ell)$ and we orient them away from the root, consistent with the direction that the shortest path from $B_\ell$ would follow them. We regard $T_\ell$ indistinctly as a graph and as a geometric object, that is, a set of oriented segments connecting vertices of the terrain. See Figure~\ref{fig:segment_1} for an example.

    \begin{figure}[tb]
		\centering
		\includegraphics[width =.9\textwidth,page=2]{figures}
		\caption{The tree $T_\ell$ with blue dashed arcs. Edges of $T_\ell$ contained in an edge of the polygon are thicker to make them visible.}
		\label{fig:segment_1}
	\end{figure}

    Consider an (oriented) edge $p\rightarrow q$ of $T_\ell$; the point $p$ is closer to $B_\ell$ than $q$ is. It may be that $p=B_\ell$. The \emph{forward prolongation} of $p\rightarrow q$ is the segment obtained by extending the directed segment $p\rightarrow q$ inside the terrain until it reaches the boundary of the terrain. The interior of the segment $pq$ is not part of the prolongation. The forward prolongation may be empty. However, if the forward prolongation of $p\rightarrow q$ is non-empty, then $q$ is a reflex vertex and the line supporting $p\rightarrow q$ is tangent to the boundary of the polygon at $q$. Each point on the forward prolongation is further from $B_\ell$ than $q$ is. Let $L$ be the set of non-zero-length forward prolongations of segments $p\rightarrow q$ of $T_\ell$. See Figure~\ref{fig:segment_2}. The \emph{backward prolongation} of an edge $p\rightarrow q$ of $T_\ell$ is the extension of $p\rightarrow q$ from $p$ in the direction $q\rightarrow p$ until it reaches the boundary of the terrain. The backward prolongation of $p\rightarrow q$ is empty if and only if $p=B_\ell$.

    \begin{figure}[tb]
		\centering
		\includegraphics[width =.9\textwidth,page=3]{figures}
		\caption{Forward and some backward prolongations in the example of Figure~\ref{fig:segment_1}. The set $L$, of non-empty forward prolongations of the edges of $T_\ell$, is in solid, thick red. The edges of $T_\ell$ defining $L$ are in dashed blue. In dashed-dotted orange is the set of backward prolongations for the edges defining $L$.}
		\label{fig:segment_2}
	\end{figure}
    
    A similar construction is done to obtain a shortest-path tree $T_r$ from the right endpoint $B_r$ of the base of the terrain, the prolongations of its edges, and the set $R$ of non-empty forward prolongations for the edges of $T_r$. See Figure~\ref{fig:segment_3}. Similarly, we use $E(T_r)$ to denote the set of edges of $T_r$, which may be treated as a set of oriented segments.

    \begin{figure}[tb]
		\centering
		\includegraphics[width =.9\textwidth,page=4]{figures}
		\caption{Forward and some backward prolongations for $T_r$ in the example of Figure~\ref{fig:segment_1}. The set $R$, of non-empty forward prolongations of the edges of $T_r$, is in solid, thick blue. The edges of $T_r$ defining $R$ are in dashed red. In dashed-dotted orange is the set of backward prolongations for the edges defining $R$.}
		\label{fig:segment_3}
	\end{figure}
    
    Using that the terrain is an $x$-monotone polygon and the lower boundary is a single segment, one obtains the following properties.
    
\begin{lemma}[Lemmas 3, 4 and 5 in~\cite{DasDM2021}]
\label{lem:apex}
    The backward prolongation of each edge in $E(T_\ell)\cup E(T_r)$ has an endpoint on the base of the terrain; it may be a vertex of the base. The segments in $E(T_\ell)$ have positive slope and the segments in $E(T_r)$ have negative slope.
    
    If the apex of the grounded triangle with largest area is \emph{not} on the boundary of the terrain, then there is an edge $s_\ell$ of $L$ and an edge $s_r$ of $R$ such that: the left side of the triangle is collinear with $s_\ell$, the right side of the triangle is collinear with $s_r$, and the apex of the triangle is the intersection $s_\ell\cap s_r$.
\end{lemma}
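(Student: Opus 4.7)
The plan is to handle the three assertions in the order they appear, since the slope claim feeds the backward-prolongation claim and both support the structural claim about the optimal apex.

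For the backward-prolongation claim, fix an edge $p\to q$ of $T_\ell$ with $p\ne B_\ell$; the case $p=B_\ell$ is vacuous. As an internal vertex of a taut geodesic from $B_\ell$, the point $p$ is reflex and the geodesic hugs the reflex side of $p$, so the line through $pq$ continued backward from $p$ immediately enters the interior of the terrain. Using the positive slope established in the next paragraph, the backward ray heads down-and-to-the-left; since the upper boundary is $x$-monotone and lies above the horizontal base, the ray cannot re-exit through the upper boundary and must leave through the base. The statement for $E(T_r)$ follows by left-right symmetry.

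For the slope claim I would induct on the depth in $T_\ell$. An edge incident to $B_\ell$ goes to a vertex visible from the unique leftmost-and-lowest vertex of the terrain, so it points strictly up and to the right, giving positive slope. For a deeper edge $p\to q$, use that shortest paths from an extremal $x$-vertex of an $x$-monotone polygon are $x$-monotone, so $p.x<q.x$, and that the reflex vertex $p$ is a local minimum of the $x$-monotone upper boundary, so that taut-string reasoning at $p$ forces $q.y>p.y$. Reflecting horizontally through $B_r$ yields the corresponding negative-slope statement for $E(T_r)$.

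For the structural apex claim, I would invoke Lemma~\ref{lem:structure1}. Assuming the apex $C$ is not on the boundary, each of the left and right sides contains two terrain vertices. Let $p,q$ be the two on the left side, labeled so that $p$ lies between the left base vertex $A$ of the triangle and $q$ along the chord $AC$. The line through $p$ and $q$ meets the base at $A$ when extended backward from $p$ and continues through $C$ when extended forward past $q$. I would then argue that the visibility edge $pq$ belongs to $E(T_\ell)$: by the no-three-collinear-vertices assumption, the only terrain vertices on the line $\overline{pq}$ are $p$ and $q$, so any $T_\ell$-edge whose line coincides with the line of the left side must be $pq$; a taut-string comparison, using that the route $B_\ell\leadsto A\to p\to q$ along the base and then straight up the chord has length $|B_\ell A|+|Aq|$, rules out every other penultimate vertex for $q$ in the shortest path tree. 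Hence the forward prolongation of $pq$ is an element $s_\ell\in L$ collinear with the left side, and since $C$ lies strictly beyond $q$ on that line and strictly inside the terrain, $C\in s_\ell$. The mirror argument produces $s_r\in R$, and the apex is exactly $s_\ell\cap s_r$.

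The step I expect to be the main obstacle is verifying that the visibility edge $pq$ actually sits in $T_\ell$, i.e., that $p$ is the parent of $q$ in the shortest-path tree from $B_\ell$. The collinearity constraint forces any witness tree edge on the line of the left side to be $pq$ itself, so if this step fails there is no witness at all; carrying it out requires combining the slope and backward-prolongation claims with a careful length comparison against alternative visibility edges into $q$, which is where the machinery built in the first two paragraphs is truly needed.
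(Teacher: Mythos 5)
First, a remark on the comparison itself: the paper does not prove this lemma. Section 3 explicitly imports it from Lemmas 3--5 of \cite{DasDM2021} ``without repeating their proofs here,'' so there is no in-paper argument to measure your attempt against; it has to stand on its own. It does not, because of a genuine gap in the first part. You justify that the backward (down-and-to-the-left) ray from $p$ reaches the base by saying that, since the upper boundary is $x$-monotone and lies above the base, the ray ``cannot re-exit through the upper boundary.'' That implication is false for a general positive-slope segment in a terrain: the upper boundary can dip below such a ray at a valley vertex $w$ with $x_w<x_p$ (take a deep notch just left of $p$ and a shallow ray), and then the ray leaves the terrain through the descending edge of that notch, well above the base. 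The claim is true only because $pq$ is an edge of the shortest-path tree, and the proof must use that. One way: tautness at each reflex turn shows the geodesic from $B_\ell$ to $p$ stays strictly above the supporting line $\lambda$ of $pq$ for $x<x_p$; if the backward ray first left the terrain at a point $z$ of the upper boundary not on the base, the vertical chord of the terrain at $x=x_z$ would lie entirely on or below $\lambda$, yet the ($x$-monotone) geodesic must cross that chord --- a contradiction. Your paragraph never invokes the tree structure, so this step fails as written. (The slope argument in your second paragraph is essentially sound as an induction using tautness at valley vertices, modulo the degenerate edge $B_\ell\to B_r$ of slope zero, which is harmless since its forward prolongation is empty.)

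The third part is also incomplete, as you acknowledge. The crux is exactly the step you flag: by the general-position assumption the only candidate tree edge on the line of the left side is $pq$, so one must show that $p$ is the parent of $q$ in $T_\ell$. The comparison you propose --- the route $B_\ell\leadsto A\to p\to q$ of length $|B_\ell A|+|Aq|$ --- does not accomplish this: that route bends at $A$, which is an interior point of the base and not a vertex, so it is not a geodesic candidate, and knowing its length does not exclude other penultimate vertices for $q$. What is needed is an argument about the funnel of $q$, using that the segment $Aq$ lies inside the terrain and that $p$ is the reflex vertex on it adjacent to $q$, to conclude that the geodesic from $B_\ell$ to $q$ makes its last turn at $p$ and arrives along the direction of $Aq$; only then is the apex on the forward prolongation of an edge of $E(T_\ell)$. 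As submitted, the first and third assertions remain unproven, so the attempt cannot be accepted as a proof of the lemma.
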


\section{New algorithm}
    We are now going to describe the new algorithm. In fact, we describe the missing piece in the previous approach of~\cite{DasDM2021}.
    Because of Lemma~\ref{lem:structure1}, it suffices to search for the grounded triangle of largest area. We have two cases to consider: the apex may be on the boundary of the terrain or not. The first case can be handled using Lemma~\ref{lem:boundary}. To approach the second case, we use Lemma~\ref{lem:apex}: in such a case the apex of the triangle belongs to $A=\{ s_\ell\cap s_r\mid s_\ell\in L,\, s_r\in R\}$. We refer to $A$ as the \emph{set of candidate apices}. See Figure~\ref{fig:segment_A}. Each candidate apex $s_\ell\cap s_r$ defines uniquely a triangle because its sides must be collinear with $s_\ell$, $s_r$ and the base of the terrain.

    \begin{figure}[tb]
		\centering
		\includegraphics[width =.9\textwidth,page=5]{figures}
		\caption{The candidate apices for the example of Figure~\ref{fig:segment_1} are marked with green boxes. Two of the triangles they define are shaded.}
		\label{fig:segment_A}
	\end{figure}
    
    We start providing a simple property for $L$ and $R$.
    
\begin{lemma}
\label{lem:LR}
    The edges of $L$ are pairwise interior-disjoint and can be computed in $O(n)$ time.
    The same holds for $R$.
\end{lemma}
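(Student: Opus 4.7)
The plan is to identify the set $L$ with the set of \emph{windows} of the shortest path map of the terrain from source $B_\ell$, and then to invoke classical results about its planarity and linear-time construction. Recall that the shortest path map of a simple polygon $P$ from a source $s$ is the planar subdivision of $P$ in which two points belong to the same cell iff their shortest paths to $s$ bend at the same sequence of reflex vertices. Its interior edges, called \emph{windows}, are exactly the segments obtained by extending each edge $p\to q$ of the shortest path tree past its endpoint $q$ (which must be a reflex vertex, otherwise the extension immediately leaves $P$) until it meets the boundary of $P$. This coincides with the definition of the forward prolongation of $p\to q$ given before Lemma~\ref{lem:apex}, so $L$ is precisely the set of non-empty windows of the shortest path map rooted at $B_\ell$.

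Given this identification, the interior-disjointness claim is immediate: the windows are edges of a planar subdivision of the terrain, so any two of them can meet only at their endpoints. Thus the segments in $L$ are pairwise interior-disjoint.

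For the running-time bound, I would first triangulate the terrain in $O(n)$ time, which is possible because the terrain is an $x$-monotone polygon. On a triangulated simple polygon with $n$ vertices and a fixed source on its boundary, the algorithm of Guibas, Hershberger, Leven, Sharir and Tarjan computes the shortest path tree together with the full shortest path map in $O(n)$ time. From this subdivision we read off the windows, discarding the zero-length ones, in $O(n)$ additional time, obtaining $L$. The argument for $R$ is completely symmetric, using $B_r$ as the source instead of $B_\ell$.

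The only conceptual step is the identification of forward prolongations with shortest path map windows; once that is in place, everything reduces to invoking well-known linear-time subroutines. I therefore do not expect any substantial obstacle: the coincidence of the two notions is essentially definitional, and both triangulation of a monotone polygon and construction of the shortest path map in a triangulated simple polygon are classical.
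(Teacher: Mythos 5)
Your proposal is correct and follows essentially the same route as the paper: both identify $L$ with the forward extensions produced by the shortest-path-map construction of Guibas et al.\ (the extension of their Theorem~2.1) and obtain the $O(n)$ bound from that algorithm. The only cosmetic difference is in the disjointness argument --- you appeal to the planarity of the shortest path map, while the paper notes that each forward prolongation is contained in a shortest path from $B_\ell$ and that such paths cannot cross --- but these are two phrasings of the same underlying fact.
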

\begin{proof}
    Consider the forward prolongation $qt\in L$ of the oriented edge $p\rightarrow q$ of $T_\ell$. The shortest path from $B_\ell$ to any point on $qt$ consists of the shortest path from $B_\ell$ to $q$ followed by a portion of $qt$. (See for example~\cite[Lemma~2.1]{GuibasHLST87} for the structure of shortest paths in a polygon without holes.)
    It follows that the edges of $L$ are contained in shortest paths from $B_\ell$ and thus they are pairwise disjoint. (They cannot overlap because of our assumption on general position.)
    
    Guibas et al. \cite{GuibasHLST87} show how to compute in $O(n)$ time the shortest path tree $T_\ell$ from $B_\ell$ and the forward extensions $L$. This is the \emph{extended algorithm} discussed after their Theorem 2.1, where they decompose the polygon into regions such that the shortest path to any point in the region goes through the same sequence of vertices of the polygon.
\end{proof}

    We use Lemma~\ref{lem:LR} to compute  $L$ and  $R$ in linear time.
    We will start using $\ell$ or $\ell_i$ for a generic segment of $L$ and $r$ or $r_j$ for a generic segment of $R$.
    Note that $L\cup R$ has $O(n)$ segments. However, the set $A$ of candidate apices may have quadratic size, as can be seen in the schematic construction of Figure~\ref{fig:segment_A2}. To get our improved running time we treat them implicitly.

    \begin{figure}[tb]
		\centering
		\includegraphics[width =.9\textwidth,page=6]{figures}
		\caption{Example showing that we may have a quadratic number of candidate apices.}
		\label{fig:segment_A2}
	\end{figure}
	
    We use a \textit{hereditary segment tree}, introduced by Chazelle et al.~\cite{Chazelle1994}, as follows. We decompose the $x$-axis into intervals using the $x$-coordinates of the endpoints of the segments in $L\cup R$. We disregard the two unbounded intervals, that is, the leftmost and the rightmost intervals. 
    The resulting intervals are called the \emph{atomic intervals}. See Figure~\ref{fig:atom_interval} for an example where the atomic intervals are marked as $1,2,\dots,11$. We make a height-balanced binary tree $\mathcal{T}$ such that the $i$-th leaf represents the $i$-th atomic interval from left to right; see Figure \ref{fig:seg_tree}. For each node $v$ of $\mathcal{T}$, we define the interval $I_v$ as the union of all the intervals represented in the leaves of the subtree rooted at $v$.  Alternatively, for each internal node $v$, the interval $I_v$ is the union of the intervals represented by its two children. In the two-dimensional setting, the node $v$ represents the vertical strip bounded by the vertical lines passing through the end points of $I_v$. Let us denote this strip by $J_v$. In Figure \ref{fig:atom_interval}, the vertical strip $J_v$ is highlighted for the node $v$ of Figure \ref{fig:seg_tree} that is highlighted.
    
    \begin{figure}[tb]
    	\centering
    	\includegraphics[width =.8\textwidth,page=7]{figures}
    	\caption{Example showing the atomic intervals for a case with $|R|=|L|=3$.}
    	\label{fig:atom_interval}
	\end{figure}
	
	\begin{figure}[tb]
		\centering
		\includegraphics[width =.9\textwidth,page=8]{figures}
		\caption{Hereditary segment tree for the example of Figure~\ref{fig:atom_interval}. All the lists that are not indicated are empty.}
    	\label{fig:seg_tree}
	\end{figure}
    
    Consider a node $v$ of $\mathcal{T}$ and denote by $w$ its parent.
    We maintain in $v$ four lists of segments: $L_v$, $R_v$, $L_v^h$ and $R_v^h$. The list $L_v$ contains all the segments $\ell \in L$ such that the $x$-projection of $\ell$ contains $I_v$ but does not contain $I_w$.
    Similarly, $R_v$ contains the segments $r \in R$ whose projection onto the $x$-axis contains $I_v$ but does not contain $I_w$.
    We call $L_v$ and $R_v$ the \emph{standard} lists. The list $L_v^h$ contains the members of $L_u$ for all proper descendants $u$ of $v$ in $\mathcal{T}$, that is, all descendants of $v$ excluding $v$ itself. Similarly, $R_v^h$ contains the members of $R_u$ for all proper descendants $u$ of $v$ in $\mathcal{T}$. We call $L_v^h$ and $R_v^h$ the \emph{hereditary} lists. We put only one copy of a segment in a hereditary list of a node, even if it is stored in more than one of its descendants. See Figure~\ref{fig:seg_tree} for an example. The standard lists and the hereditary lists for a node are stored explicitly at the node. 
    
    Chazelle et al.~\cite{Chazelle1994} noted that 
    \begin{equation}
        \sum_{v\text{ node of }\mathcal{T}} \bigl( |L_v|+|R_v|+|L^h_v|+|R^h_v| \bigr) ~=~ O(n\log n).
    \label{eq:boundlists}
    \end{equation}
    To obtain this bound, one first argues that each single segment $s$ of $L\cup R$ appears in $O(\log n)$ standard lists, namely in at most two nodes at each level. Moreover, the nodes that contain the segment $s$ in their standard lists have $O(\log n)$ ancestors in total, namely a subset of the nodes on the search path in $\mathcal{T}$ to the extreme atomic intervals contained in the projection of $s$. It follows that each segment $s\in L\cup R$ appears in $O(\log n)$ hereditary lists.

    For each node $v$ of $\mathcal{T}$ we define the intersections
    \begin{align*}
        A_v ~=~ &\{ \ell\cap r\mid  \ell\in L_v,\, r\in R_v,\, x(\ell\cap r)\in I_v \} \cup \\
                 &\{ \ell\cap r\mid  \ell\in L^h_v,\, r\in R_v,\, x(\ell\cap r)\in I_v \}\cup \\
                 &\{ \ell\cap r\mid  \ell\in L_v,\, r\in R^h_v,\, x(\ell\cap r)\in I_v \}.
    \end{align*}
    The set $A_v$ is the set of \emph{candidate apices defined by the node $v$}.
    Note that in the definition of $A_v$ we exclude the case when $\ell\in L^h_v$ and $r\in R^h_v$.
    
\begin{lemma}
\label{lem:union}
    The set of candidate apices, $A$, is the disjoint union of the sets $A_v$, where $v$ iterates over the nodes of $\mathcal{T}$.
\end{lemma}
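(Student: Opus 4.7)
The plan is to attach to every intersection point $p = \ell \cap r \in A$ a single node $v(p)$ of $\mathcal{T}$, and then to show that $p \in A_{v(p)}$ while $p \notin A_{v'}$ for any other $v'$. Let $I_a$ be the atomic interval containing $x(p)$. From the very definition of the standard lists, the set $\{v : \ell \in L_v\}$ is an antichain in $\mathcal{T}$: if a proper ancestor of some such $v$ also stored $\ell$, then $\ell$'s $x$-projection would contain $I_{\mathrm{parent}(v)}$, contradicting $\ell \in L_v$. These standard-list nodes of $\ell$ also jointly cover every atomic interval in $\ell$'s projection, so there is a \emph{unique} ancestor $v_\ell$ of the leaf $a$ with $\ell \in L_{v_\ell}$. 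Define $v_r$ analogously. Both $v_\ell$ and $v_r$ lie on the root-to-$a$ path and are therefore comparable in the tree order.

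For existence ($A \subseteq \bigcup_v A_v$) I would let $v(p)$ be the shallower of $v_\ell$ and $v_r$ and split into three cases. If $v_\ell = v_r = v$, then $\ell \in L_v$, $r \in R_v$, and $x(p) \in I_a \subseteq I_v$, so $p$ lies in the first subset of $A_v$. If $v_\ell$ is a proper descendant of $v_r$, take $v = v_r$: then $r \in R_v$ and $\ell \in L_{v_\ell} \subseteq L_v^h$, so $p$ lies in the third subset. The remaining case is symmetric and places $p$ in the second subset. In every case $I_a \subseteq I_{v(p)}$, so the coordinate condition in the definition of $A_v$ is satisfied.

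For disjointness I would exploit the antichain property more strongly. It already gives $L_v \cap L_v^h = \emptyset$ and $R_v \cap R_v^h = \emptyset$, so the three subsets inside a single $A_v$ are pairwise disjoint. Now suppose $p \in A_{v'}$ for some $v'$; then $v'$ is an ancestor of $a$, because $x(p) \in I_{v'}$. Whichever of the three defining conditions holds, I can recover $v_\ell$ and $v_r$ from it: if $\ell \in L_{v'}$, then uniqueness of the ancestor-of-$a$ node that stores $\ell$ forces $v' = v_\ell$; if instead $\ell \in L_{v'}^h$, then $\ell$ has a standard-list node strictly below $v'$, and the antichain property then forbids any standard-list node of $\ell$ from sitting at $v'$ or above, which forces $v_\ell$ to be a proper descendant of $v'$. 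The analogous reading applies to $r$. In every case this pins $v'$ down to the shallower of $v_\ell$ and $v_r$, that is, $v' = v(p)$. The delicate step is this antichain-based uniqueness argument; existence is essentially bookkeeping on the definitions.
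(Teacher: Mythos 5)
Your proof is correct and takes essentially the same route as the paper: the paper walks up from the leaf containing $x(\ell\cap r)$ to the first node at which both segments appear in a standard or hereditary list---which is exactly your shallower of $v_\ell$ and $v_r$---and rules out the both-hereditary case just as your case analysis does. The only difference is that you spell out the antichain/uniqueness argument for disjointness explicitly, whereas the paper leaves that direction implicit.
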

\begin{proof}
	Consider a candidate apex $a\in A$. Because the segments of $L$ (and $R$) are interior disjoint by Lemma~\ref{lem:LR}, there is exactly one segment $\ell\in L$ and one segment $r\in R$ that intersect and give the apex $a=\ell\cap r$. Let $u$ be the leaf of $\mathcal{T}$ such that the $x$-coordinate of $a=\ell\cap r$ is contained in $I_u$. Let $\pi$ be the path in $\mathcal{T}$ from $u$ to the root of~$\mathcal{T}$.
	
	We walk the path $\pi$ from $u$ upwards until the first node $v$ with the property that $\ell\in L_v\cup L^h_v$ and $r\in R_v\cup R^h_v$ is reached. Since at the root $r$ we have $L^h_r=L$ and $R^h_r=R$, such a node $v$ exists. We next argue that it cannot be that $\ell\in L^h_v$ and $r\in R^h_v$. To see this, we first note that, if $v$ is a leaf, their hereditary lists are empty. Therefore, if $(\ell,r)\in L^h_v\times R^h_v$, then $v$ is not a leaf. Consider the child $w$ of $v$ in the path $\pi$. Since the interval $I_w$ of such a child $w$ contains $x(\ell\cap r)$, the interval $I_w$ intersects $x(\ell)$. Moreover, because $\ell\in L^h_v$, we conclude that $\ell\in L_w\cup L^h_w$. Similarly, $r\in R_w\cup R^h_w$, and thus $v$ was not the lowest node on $\pi$ with the desired property. We conclude that, indeed, $(\ell,r)\notin L^h_v\times R^h_v$. Finally, we note that the intersection point $a=\ell\cap r$ has its $x$-coordinate in $I_v$ because $x(\ell\cap r)\in I_u\subseteq I_v$. It follows that $a=\ell\cap r\in A_v$.
    
    To argue that the union is disjoint, we first note that only nodes $w$ along $\pi$ satisfy that $x(\ell\cap r)\in I_w$. Therefore, only for the nodes $w$ of $\pi$ can $A_w$ contain $\ell\cap r$. For each proper descendant $w$ of $v$ along $\pi$, we have $\ell\notin L_w$ or $r\notin R_w$ because of the definition of $v$ as the lowest node of $\pi$ satisfying $\ell\in L_v\cup L^h_v$ and $r\in R_v\cup R^h_v$. Therefore, $a=\ell\cap r$ does not belong to $A_w$ for any descendant $w$ of $v$ along $\pi$. For each ancestor $w$ of $v$ along $\pi$, we will have $\ell\in L^h_w$ and $r\in R^h_w$, and therefore because of the definition of $A_w$ we have $a\notin A_w$ for those ancestors.
\end{proof}

    We have to find the best apex in $A$. Since $A=\bigcup_v A_v$ because of Lemma~\ref{lem:union}, it suffices to find the best apex in $A_v$ for each $v$. (We do not use the property that the union is disjoint.) For this we consider each $v$ separately and look at the interaction between the lists $L_v$ and $R_v$, the lists $L_v^h$ and $R_v$, and the lists $R_v^h$ and $L_v$. 
    
\subsection{Interaction between two standard lists}
\label{sec:stst}

   Consider a fixed node $v$ and its standard lists $L_v$ and $R_v$. The $x$-projection of each segment in $L_v\cup R_v$ is a superset of the interval $I_v$, and thus no endpoint of such a segment lies in the interior of $J_v$. See Figure~\ref{fig:standard} to follow the discussion in this section.

\begin{figure}[tb]
		\centering
		\includegraphics[scale=1.1,page=9]{figures}
		\caption{Interaction between two standard lists. The segments in solid may be longer, but the part contained in $J_v$ is contained in the segment. The dashed portion may contain part of the segment and contains the whole backward prolongation of the segment.}
    	\label{fig:standard}
\end{figure}
    
    Since the segments in $L_v$ are pairwise interior-disjoint (Lemma~\ref{lem:LR}) and they cross the vertical strip $J_v$ from left to right, we can sort them with respect to their $y$-order within the vertical strip $J_v$. We sort them in decreasing $y$-order. Henceforth, we regard $L_v$ as a sorted list. Thus, $L_v$ contains $\ell_1,\dots,\ell_{|L_v|}$ and, whenever $1\le i< j\le |L_v|$, the segment $\ell_i$ is above $\ell_j$.
    We do the same for $R_v$, also by decreasing $y$-coordinate. Thus, $R_v$ is a sorted list $r_1,\dots,r_{|R_v|}$ and, whenever $1\le i< j\le |R_v|$, the segment $r_i$ is above $r_j$. 

	We will discuss in Section~\ref{sec:together} how the sorted lists $L_v$ and $R_v$ for all nodes $v$ together can be obtained in $O(|L|\log |L| + |R|\log |R|)=O(n \log n)$ time. For the time being, we assume that $L_v$ and $R_v$ are already sorted.

    Because of Lemma~\ref{lem:apex}, each segment $s$ of $L\cup R$ can be prolonged inside the terrain until it hits the base of the terrain. Indeed, such a prolongation contains an edge of $E(T_\ell)\cup E(T_r)$ by definition. Let $b(s)$ be the point where the prolongation of $s$ intersects the base of the terrain. 
    
    \begin{lemma}
    \label{lem:strip_base_height}
        If $1\le i<j\le |L_v|$, then $b(\ell_i)$ lies to the right of $b(\ell_j)$.
        If $1\le i<j\le |R_v|$, then $b(r_i)$ lies to the left of $b(r_j)$.
    \end{lemma}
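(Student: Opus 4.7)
The claim for $R_v$ follows by symmetry, so I focus on $L_v$. Let $\lambda_i, \lambda_j$ denote the supporting lines of $\ell_i, \ell_j$, with positive slopes $m_i, m_j$ (Lemma~\ref{lem:apex}), and let $p_i \to q_i$ and $p_j \to q_j$ be the $T_\ell$-edges whose forward prolongations are $\ell_i$ and $\ell_j$, so that $q_i, q_j$ are reflex vertices on the upper boundary with $x(q_i), x(q_j) \le a$ when $I_v = [a,b]$. The plan is to locate the intersection point $(x^*, y^*) = \lambda_i \cap \lambda_j$ and prove $y^* > 0$. Since $x(b(\ell_i)) = x^* - y^*/m_i$ (and analogously for $\ell_j$), and since $m_i > m_j$ will be established along the way, the conclusion will follow from the direct computation
\[
x(b(\ell_i)) - x(b(\ell_j)) \;=\; y^*\,(m_i - m_j)/(m_i\, m_j) \;>\; 0.
\]

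First I would establish $x(q_i) > x(q_j)$ by contradiction. If $x(q_i) \le x(q_j)$, then $\ell_i$ is defined at $x$-coordinate $x(q_j)$; interior-disjointness of $\ell_i$ and $\ell_j$ (Lemma~\ref{lem:LR}) combined with $\ell_i$ above $\ell_j$ in $I_v$ propagates to $\ell_i$ above $\ell_j$ on the entire common $x$-range, so at $x(q_j)$ the height of $\ell_i$ is at least $y(q_j)$. On the other hand $q_j$ lies on the upper boundary, so this height is at most $y(q_j)$. The resulting equality places $\lambda_i$ through the three distinct vertices $p_i, q_i, q_j$, contradicting general position. The same upper-boundary argument gives $\lambda_i(x(q_j)) \le y(q_j) = \lambda_j(x(q_j))$, which combined with $\lambda_i(a) > \lambda_j(a)$ locates the crossing at some $x^* \in [x(q_j), a)$ and, since the linear difference $\lambda_i - \lambda_j$ goes from non-positive to strictly positive, forces $m_i > m_j$.

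The main obstacle is verifying $y^* > 0$ strictly. I plan to split on the position of $x(b(\ell_i))$ relative to $x(q_j)$. If $x(b(\ell_i)) \le x(q_j)$, then $x^* \ge x(q_j) \ge x(b(\ell_i))$ gives $y^* = m_i(x^* - x(b(\ell_i))) \ge 0$, and strict positivity follows because $y^* = \lambda_j(x^*) \ge \lambda_j(x(q_j)) = y(q_j) > 0$. If instead $x(b(\ell_i)) > x(q_j)$, then at $x = x(b(\ell_i))$ the value of $\lambda_i$ is zero while $\lambda_j$ is strictly positive (because $\lambda_j$ is increasing and $\lambda_j(x(q_j)) = y(q_j) > 0$), forcing the crossing to occur at some $x^* \in (x(b(\ell_i)), a)$, which again yields $y^* > 0$.
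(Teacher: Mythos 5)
Your proof is correct, but it takes a genuinely different route from the paper's. The paper argues by contradiction at the level of the maximal in-terrain segments: letting $s_i\supseteq\ell_i$ and $s_j\supseteq\ell_j$ be the longest segments inside the terrain, it assumes $b(\ell_i)$ lies left of $b(\ell_j)$, deduces that $s_i$ and $s_j$ are disjoint with $s_i$ entirely above $s_j$, and then observes that $s_j$ could not pass through any terrain vertex to the left of $J_v$ (such a vertex would lie strictly below the in-terrain segment $s_i$, impossible for a point of the upper boundary) --- yet the left endpoint $q_j$ of $\ell_j$ is exactly such a vertex. You instead give a direct computation: you pin down the crossing $(x^*,y^*)$ of the two supporting lines, show $y^*>0$ and $m_i>m_j$, and read off the order of the base intercepts from $x(b(\ell_\cdot))=x^*-y^*/m_\cdot$. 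Both arguments rest on the same underlying fact (no point of a segment inside the terrain can lie strictly above a vertex of the upper boundary), but yours buys something extra: it explicitly establishes that the in-terrain extensions of $\ell_i$ and $\ell_j$ meet at a point above the base and to the left of $a$, which is precisely the existence of the point $p_\ell$ that the paper later needs in the proof of Lemma~\ref{lem:monotone_result} and there only asserts ``because of Lemma~\ref{lem:strip_base_height}.'' The paper's version is shorter and avoids coordinates. One step of yours deserves an explicit sentence: the inequality $\lambda_i(x(q_j))\le y(q_j)$ is justified by an ``upper-boundary argument'' only when the point of $\lambda_i$ at abscissa $x(q_j)$ actually lies in the terrain, i.e.\ when $x(b(\ell_i))\le x(q_j)$; in the other case the point lies strictly below the base and the inequality holds trivially. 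You clearly know this, since you split on exactly that dichotomy afterwards, but as written the claim is asserted before the case distinction is made.
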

    \begin{proof}
        Let $s_i$ be the longest segment that contains $\ell_i$ and is contained in the terrain; let $s_j$ be the longest segment that contains $\ell_j$ and is contained in the terrain. Thus $b(\ell_i)$ is an endpoint of $s_i$ and $b(\ell_j)$ is an endpoint of $s_j$.
        Assume, for the sake of reaching a contradiction, that $b(\ell_i)$ lies to the left of $b(\ell_j)$.  This means $s_i$ and $s_j$ do not intersect to the the left of $J_v$, and thus $s_i$ is completely above $s_j$ for any $x$-coordinate to the left of $I_v$. Then $s_j$ cannot go through any vertex of the terrain to the left of $J_v$, as such a vertex would be below $s_i$, which is contained in the terrain. By construction of the hereditary segment tree, $I_v$ is a proper subset of the $x$-projections of $\ell_j$ and none of the endpoints of $\ell_j$ belongs to the interior of $J_v$. This means that the left endpoint of the segment $\ell_j$ should be to the left of $J_v$. Since the left endpoint of $\ell_j$ has to be a vertex of the terrain, we arrive at the contradiction.

        The argument for segments of $R$ is similar.
    \end{proof}

    Once $L_v$ and $R_v$ are sorted, we can detect in $O(|L_v|+|R_v|)$ time which segments of $L_v$ do not cross any segment of $R_v$ inside $J_v$. Indeed, we can merge the lists to obtain the order $\pi_\ell$ of $L_v\cup R_v$ along the left boundary of $J_v$ and the order $\pi_r$ along the right boundary of $J_v$. Then we note that $\ell_i$ does not intersect any segment of $R_v$ inside $J_v$ if and only if the rank of $\ell_i$ is the same in $\pi_\ell$ and in $\pi_r$.
    We remove from $L_v$ the segments that do not cross any segment of $R_v$ inside $J_v$. To avoid introducing additional notation, we keep denoting to the resulting list as $L_v$.
    
    Within the same running time $O(|L_v|+|R_v|)$ we can find for each $\ell_i\in L_v$ an index $\psi(i)$ such that $\ell_i$ and $r_{\psi(i)}$ intersect inside $J_v$. Indeed, if $\ell_i$ crosses some segment of $R_v$ inside $J_v$, then because $\ell_i$ has positive slope and the segments of $R_v$ have negative slope, $\ell_i$ must cross the segment of $R_v$ that in the order $\pi_\ell$ is above $\ell_i$. These segments for all $\ell_i$ can be computed with a scan of the order $\pi_\ell$.
    In the example of Figure~\ref{fig:standard}, we would have $\psi(1)=\psi(2)=\psi(3)=2$ and $\psi(4)=3$.

    Consider the $|L_v| \times |R_v|$ matrix $M= (M[i,j])_{i,j}$ defined as follows. If $\ell_i$ and $r_j$ intersect in $J_v$, then $M[i,j]$ is the area of the grounded triangle with apex $\ell_i\cap r_j$ and sides containing $\ell_i$ and $r_j$. If $\ell_i$ and $r_j$ do not intersect in $J_v$, and $j< \psi(i)$, then $M[i,j]= j\varepsilon$ for an infinitesimal $\varepsilon>0$. In the remaining case, when $\ell_i$ and $r_j$ do not intersect in $J_v$ but $\psi(i)<j$, then $M[i,j]=-j\varepsilon$  for the same infinitesimal $\varepsilon>0$. The value $\varepsilon>0$ can be treated symbolically and does not need to take any explicit value; it is only important that $n\varepsilon$ is positive and smaller than any area of any triangle we consider. (Recall that $\ell_i$ and $r_{\psi(i)}$ intersect, so there is no need to consider the option where $\ell_i$ and $r_{\psi(i)}$ do not intersect.)

	Note that within each row of $M$ the non-infinitesimal elements are contiguous. Indeed, whether a segment of $L_v$ and a segment of $R_v$ intersect in $J_v$ depends only on the orders $\pi_\ell$ and $\pi_r$ along the boundaries of $J_v$, which is the same as the order along the lists $L_v$ and $R_v$. It also follows that the entries of $M$ defined as areas of triangles form a staircase such that in lower rows it moves towards the right. It follows that a row of $M$, when we walk it from left to right, has small positive increasing values until it reaches values defined by the area of triangles, and then it starts taking small negative values that decrease. 
    
    The matrix $M$ is not constructed explicitly, but we work with it implicitly. Given a pair of indices $(i,j)$, we can compute $M[i,j]$ in constant time.
	In the example of Figure~\ref{fig:standard}, if we denote by $\alpha(\ell_i,r_j)$ the area of the grounded triangle with apex $\ell_i\cap r_j$ and sides containing $\ell_i$ and $r_j$, when the intersection $\ell_i\cap r_j$ exists, then we have
	\[
		M~=~ \left[ \begin{matrix}
				\alpha(\ell_1,r_1) & \alpha(\ell_1,r_2) & -3\varepsilon\\
				\alpha(\ell_2,r_1) & \alpha(\ell_2,r_2) & -3\varepsilon\\
				\varepsilon & \alpha(\ell_3,r_2) & -3\varepsilon\\
				\varepsilon & 2\varepsilon & \alpha(\ell_4,r_3)
			\end{matrix}
		\right] \, .
	\]

	We will show that the matrix $M$ is \emph{totally monotone}. We start with the following special case, which is the one involving more geometry.

    \begin{lemma}
    \label{lem:monotone_result_crossing}
        Consider indices $i,i',j,j'$ such that $1\le i< i'\le |L_v|$, $1\le j< j'\le |R_v|$ and such that 
		the four intersections $\ell_i\cap r_j$, $\ell_{i'}\cap r_j$, $\ell_i\cap r_{j'}$, $\ell_{i'}\cap r_{j'}$ occur
		inside $J_v$.
		If $M[i',j]> M[i',j']$, then $M[i,j]> M[i,j']$.
    \end{lemma}
    \begin{proof}
        See Figure~\ref{fig:monotone} to follow the proof.
        Because of Lemma~\ref{lem:strip_base_height}, the extensions of $\ell_i$ and $\ell_{i'}$ inside the terrain intersect in a point to the left of $J_v$, which we denote by $p_\ell$. Similarly, the extensions of $r_j$ and $r_{j'}$ intersect in a point $p_r$ to the right of $J_v$.

        \begin{figure}
        \centering
            \includegraphics[scale=1,page=10]{figures}
            \caption{Scenario in the proof of Lemma~\ref{lem:monotone_result_crossing}.}
            \label{fig:monotone}
        \end{figure}

        For each $(\alpha,\beta) \in \{ i,i'\}\times \{ j,j'\}$, let $a_{\alpha,\beta}$ be the intersection point of $\ell_\alpha$ and $r_\beta$. 
        Thus, we have defined four points, namely $a_{i,j}, a_{i,j'}, a_{i',j}, a_{i',j'}$, and by assumption they are inside $J_v$.
        
        We define the following areas
        \begin{alignat*}{3}
            A_1 &= \area\bigl(\triangle (b(\ell_{i'}), b(\ell_i), p_\ell, )\bigr),~~~~~~~&&
            A_2 = \area\bigl(\triangle (p_\ell, a_{i',j'},a_{i,j'})\bigr) \\
            A_3 &= \area\bigl(\Diamond  (a_{i,j'},a_{i',j'},a_{i',j},a_{i,j})\bigr), &&
            A_4 = \area\bigl(\pentagon (b(\ell_i), b(r_j), p_r, a_{i',j'}, p_\ell) \bigr)\\
            A_5 &= \area\bigl(\triangle (a_{i',j'},p_r, a_{i',j})\bigr),&&
            A_6 = \area\bigl(\triangle (b(r_j), b(r_{j'}), p_r) \bigr).
        \end{alignat*}
        
        The condition $M[i',j]> M[i',j']$ translates into
        \[ 
            A_1 + A_4 + A_5 ~=~ M[i',j] ~>~ M[i',j'] ~=~ A_1 + A_4 + A_6, 
        \]
        which implies that $A_5 > A_6$. 
        We then have
        \[
            M[i,j] ~=~ A_2+A_3+A_4+A_5 ~>~ A_2+A_3+A_4+A_6 ~>~ A_2+ A_4+A_6~=~ M[i,j'],
        \]
        as we wanted to show.
    \end{proof}
    	
    We now consider the general case to show that the matrix $M$ is \emph{totally monotone}. In fact, the lemma restates the definition of totally monotone matrix.
    
    \begin{lemma}
    \label{lem:monotone_result}
        Consider indices $i,i',j,j'$ such that $1\le i< i'\le |L_v|$ and $1\le j< j'\le |R_v|$. If $M[i',j]> M[i',j']$, then $M[i,j]> M[i,j']$.
    \end{lemma}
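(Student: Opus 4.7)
The plan is a case analysis on which of the four entries $M[i,j], M[i,j'], M[i',j], M[i',j']$ is a genuine triangle area and which is one of the bookkeeping infinitesimals. Throughout, write $x_i:=x(b(\ell_i))$, $y_j:=x(b(r_j))$, and let $p_i,\,q_j$ denote the slope of $\ell_i$ and the absolute value of the slope of $r_j$. The monotonicities I will use are: Lemma~\ref{lem:strip_base_height} gives that $x_i$ strictly decreases in $i$ and $y_j$ strictly increases in $j$; in addition, since $\ell_i$ lies strictly above $\ell_{i'}$ throughout $I_v$ and both segments meet the base to the left of $I_v$ with positive slope, comparing their heights at the left endpoint of $I_v$ forces $p_i>p_{i'}$ for $i<i'$, with the symmetric statement $q_j>q_{j'}$ for $j<j'$.

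The easy cases are those in which at least one of the four entries is an infinitesimal. For a fixed row $i$, the column indices whose entries are positive infinitesimals, genuine areas, and negative infinitesimals form three consecutive intervals, and for $i<i'$ these intervals slide weakly to the right in row $i'$ (this is the staircase property stated just before the lemma). Consequently, whenever $M[i',j]$ is a positive infinitesimal so is $M[i,j]$ with the same value $j\varepsilon$, and whenever $M[i',j']$ is a negative infinitesimal so is $M[i,j']$ with the same value $-j'\varepsilon$; the small number of remaining sub-cases reduce to comparing values of known sign, and the conclusion $M[i,j]>M[i,j']$ is immediate from the hypothesis $M[i',j]>M[i',j']$.

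The main case, and the main obstacle, is when all four entries are genuine areas. Here I would use the explicit formula
\[A(i,j)=\frac{(y_j-x_i)^2\,p_i\,q_j}{2(p_i+q_j)}\]
and aim for the stronger log-supermodular inequality $A(i,j)\,A(i',j')\ge A(i,j')\,A(i',j)$, which implies the lemma in one line. Reparameterising each line by its two heights at the vertical boundaries of $J_v$ gives coordinates $(L_i,R_i)$ for $\ell_i$ and $(L_j^*,R_j^*)$ for $r_j$, in which the monotonicities become $L_i>L_{i'}$, $R_i>R_{i'}$, $L_j^*>L_{j'}^*$, $R_j^*>R_{j'}^*$, $p_i=R_i-L_i$, and $q_j=L_j^*-R_j^*$; the apex of $T(\ell_i,r_j)$ has height $(R_iL_j^*-L_iR_j^*)/(p_i+q_j)$ and $A(i,j)=(R_iL_j^*-L_iR_j^*)^2/[2p_iq_j(p_i+q_j)]$. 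A direct expansion yields the two identities
\[(R_iL_{j'}^*-L_iR_{j'}^*)(R_{i'}L_j^*-L_{i'}R_j^*)-(R_iL_j^*-L_iR_j^*)(R_{i'}L_{j'}^*-L_{i'}R_{j'}^*)=(L_{i'}R_i-L_iR_{i'})(L_j^*R_{j'}^*-L_{j'}^*R_j^*)\]
and $(p_i+q_{j'})(p_{i'}+q_j)-(p_i+q_j)(p_{i'}+q_{j'})=(p_i-p_{i'})(q_j-q_{j'})$, whose right-hand sides are both positive under our constraints (the first encodes $x_i>x_{i'}$ and $y_j<y_{j'}$, the second encodes $p_i>p_{i'}$ and $q_j>q_{j'}$). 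These two sign statements push the log-supermodular inequality in opposite directions, and the delicate remaining step — the main obstacle — is to combine them, using the further geometric data that every apex $\ell_\bullet\cap r_\bullet$ actually lies in $J_v$ (equivalently, every numerator $R_iL_j^*-L_iR_j^*$ is positive), to conclude the desired implication.
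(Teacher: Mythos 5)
Your handling of the degenerate (infinitesimal) cases is broadly in the spirit of the paper's case analysis, but one of your two propagation claims points the wrong way: since the staircase of genuine-area entries moves to the \emph{right} as the row index increases, a positive infinitesimal in row $i'$ does \emph{not} force a positive infinitesimal in row $i<i'$ (the column $j$ may lie inside, or even to the right of, row $i$'s crossing range); only negative infinitesimals propagate from row $i'$ up to row $i$, while positive ones propagate downwards. The sub-cases can still be closed, but not by the symmetric argument you sketch.

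The decisive problem is the main case, which you explicitly leave open. You reduce the lemma to the log-supermodularity inequality $A(i,j)\,A(i',j')\ge A(i,j')\,A(i',j)$, derive two identities whose signs pull in opposite directions, and then declare that combining them is ``the main obstacle.'' That combination \emph{is} the content of the lemma; without it there is no proof. Moreover, the multiplicative form is likely not the right target: what the geometry delivers, and what the paper proves, is the additive inverse-Monge inequality $M[i,j]+M[i',j']> M[i,j']+M[i',j]$, obtained with no algebra at all. Concretely, Lemma~\ref{lem:strip_base_height} (your $x_i>x_{i'}$ and $y_j<y_{j'}$, combined with the above/below order in $J_v$) forces the lines supporting $\ell_i$ and $\ell_{i'}$ to cross at a point $p_\ell$ to the left of $J_v$, and those supporting $r_j$ and $r_{j'}$ to cross at a point $p_r$ to the right of $J_v$. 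Superimposing the four grounded triangles and cancelling the regions they share, one gets $M[i',j]-M[i',j']=A_5-A_6$ (a triangle at $p_r$ minus a triangle on the base) and $M[i,j]-M[i,j']=A_3+A_5-A_6$, where $A_3\ge 0$ is the area of the quadrilateral spanned by the four apices; the implication is then immediate. If you wish to salvage the algebraic route, you should aim at the additive inequality rather than the multiplicative one, and you will still have to inject the coupling between base points and slopes that Lemma~\ref{lem:strip_base_height} encodes, together with the fact that all four apices lie in $J_v$ --- precisely the information your two identities have not yet been made to use jointly.
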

    \begin{proof}
        If $\ell_{i'}$ and $r_j$ do not intersect inside $J_v$, then $M[i',j]> M[i',j']$ can only occur when $\ell_{i'}$ and $r_{j'}$ do not intersect inside $J_v$ and $\psi(i')<j'$. In this case, within the strip $J_v$ we have $\ell_i$ above $\ell_{i'}$ and above $r_{j'}$. This means that $\ell_i$ cannot intersect $r_{j'}$ inside $J_v$ and we must have also $\psi(i)<j'$. We conclude that $M[i,j]> M[i,j']$.

        If $\ell_i$ and $r_{j'}$ do not intersect inside $J_v$, we use the contrapositive. Assuming that $M[i,j]\le M[i,j']$, then because of the order within the $i$th row it must be that $\ell_i$ does not intersect $r_j$ nor $r_{j'}$ inside $J_v$ and $j<j'<\psi(i)$. That is, in this case, within $J_v$ the segment $r_j$ is above $r_{j'}$, which is above $\ell_i$. Since $\ell_{i'}$ is within $J_v$ below $\ell_i$, we then have $M[i',j']= M[i,j']\ge M[i,j] = M[i',j]$.

        It remains to consider the case where $\ell_{i'}$ and $r_j$ intersect inside $J_v$ and also $\ell_i$ and $r_{j'}$ intersect inside $J_v$. Using that $\ell_i$ is above $\ell_{i'}$, that $r_j$ is above $r_{j'}$, that $\ell_{i'}$ intersects $r_j$ inside $J_v$, and that $\ell_i$ intersects $r_{j'}$ inside $J_v$, we conclude that $\ell_i$ also intersects $r_j$ inside $J_v$ and that $\ell_{i'}$ also intersects $r_{j'}$ inside $J_v$. For this we just have to observe the relative order of the endpoints of the segments restricted to the boundaries of $J_v$. Equivalently, we get that the entries $M[i,j]$ and $M[i',j']$ are defined by areas of triangles because the staircase is moving to the right when we look at lower rows. In this case, the conclusion follows from Lemma~\ref{lem:monotone_result_crossing}.
    \end{proof}
    
    For each index $i$ with $1\le i \le |L_v|$, let $\varphi(i)$ be the smallest index of columns where the maximum in the $i$th row of $M$ is attained. Thus, $M[i,\varphi(i)]= \max \{ M[i,j]\mid 1\le j\le |R_v| \}$.
    Since $M$ is totally monotone, we can compute the values $\varphi(i)$ for all $1\le i \le |L_v|$ simultaneously using the SMAWK algorithm of Aggarwal et al.~\cite{aggarwal1987}. This step takes $O(|L_v|+|R_v|)$ time.
    
    For the node $v$ of $\mathcal{T}$, we return the maximum among the values $M[i,\varphi(i)]$ where $i=1,\ldots,|L_v|$.
    In total we have spent $O(|L_v|+|R_v|)$ time, assuming that $L_v$ and $R_v$ were already in sorted form. 
    
\subsection{Interaction between a standard list and a hereditary list}
\label{sec:sthe}
    Consider now a fixed node $v$, its standard list $L_v$ and its hereditary list $R^h_v$. The $x$-projection of each segment in $L_v$ is a superset of the interval $I_v$, and thus no endpoint of such a segment lies in the interior of $J_v$. However, the $x$-projection of a segment in $R_v^h$ has nonempty intersection with the interval $I_v$, but it is not a superset of $I_v$. This implies that each segment of $R_v^h$ has at least one of its endpoints in the interior of $J_v$. See Figure~\ref{fig:sthe} to follow the discussion in this section.
    
	Like before, we assume that the members of $L_v$ are sorted by decreasing $y$-order within $J_v$ and use $\ell_1,\ldots,\ell_{|L_v|}$ to denote them from top to bottom. See Figure~\ref{fig:sthe}. We will see in Section~\ref{sec:together} how such an order is obtained in $O(|L_v|)$ amortized time per node $v$.

\begin{figure}[tb]
		\centering
		\includegraphics[scale=1.1,page=11]{figures}
		\caption{Interaction between the standard list $L_v$ and the hereditary list $R_v^h$. The segments in solid may be longer, but the part contained in $J_v$ is as shown.}
    	\label{fig:sthe}
\end{figure}

\begin{lemma}
\label{lem:structure}
	Each segment $r$ from the hereditary list $R_v^h$ has the following properties.
	\begin{itemize}
	\item[(a)] If $r$ intersects $\ell\in L_v$ inside $J_v$, then $r$ intersects the right boundary of $J_v$ and, at the right boundary of $J_v$, $r$ is below $\ell$.
	\item[(b)] If $r$ intersects $\ell_i$ inside $J_v$, then for each $i'\le i$ the segment $r$ intersects $\ell_{i'}$ inside $J_v$.
	\end{itemize}
\end{lemma}
	\begin{proof}
    First we show that, if an endpoint of $r\in R$ is in the strip $J_v$, then it must be above all segments $\ell\in L_v$. For this, we note that both endpoints of $r$ are on the boundary of the terrain: the rightmost endpoint is a vertex of the terrain and the leftmost endpoint is on the boundary of the terrain by definition. Since the terrain is $x$-monotone, it follows that the vertical upwards rays from both endpoints of $r$ are outside the terrain, while the segment $\ell$ is contained in the terrain.
    
    Now we show the property in (a). Assume that $r$ intersects $\ell\in L_v$ inside $J_v$. For the sake of reaching a contradiction, assume that $r$ has its right endpoint inside $J_v$. Then, because the slope of $r$ is negative, the slope of $\ell$ is positive, and the $x$-projection of $\ell$ covers $I_v$, the right endpoint of $r$ would be below $\ell$ and inside $J_v$. This contradicts the property in the previous paragraph. We conclude that $r$ intersects the right boundary of $J_v$ and, because of the slopes of $r$ and $\ell$, at the right boundary of $J_v$ the segment $\ell$ is above $r$.
    
    For property (b), it suffices to show the claim for $i'=1$: if $r$ intersects $\ell_i$ and $\ell_1$
inside $J_v$, then it intersects $\ell_{i'}$ for all $i'$ with $i'\le i$. Because of (a), $r$ intersects the right boundary of $J_v$ and, at the right boundary of $J_v$, $\ell_i$ is above $r$. It follows that, at the right boundary of $J_v$, the segment $\ell_1$ is also above $\ell_1$. Because $r$ is in the hereditary list $R^h_v$, at least one of the endpoints of $r$ is inside $J_v$, and because of property (a), this must be the left endpoint of $r$. Because of the property in the first paragraph, the left endpoint of $r$ must be above $\ell_1$. It follows that the segments $r$ and $\ell_1$ cross inside $J_v$.
    \end{proof}
   
	We proceed as follows. Because of Lemma~\ref{lem:structure}(a), we can discard any segment of $R^h_v$ that does not intersect the right boundary of $J_v$ or that is above $\ell_1$ at the right boundary of $J_v$; they do not intersect any segment of $L_v$. If no elements of $R^h_v$ remain, then no segment of $L_v$ intersects any segment of $R^h_v$ inside $J_v$ and we have finished. Otherwise, let $r_1,\ldots,r_k$ be the remaining segments of $R^h_v$, sorted from top to bottom at the right boundary of $J_v$.  We assume for the time being that the order is available; we will discuss later, in Section~\ref{sec:together}, how such an order is obtained in $O(k)$ amortized time.  We can also prune the segments of $L_v$ that do not intersect $r_k$; because of Lemma~\ref{lem:structure}(b), if they do not intersect $r_k$ inside $J_v$, they do not intersect any remaining segment of $R^h_v$ inside $J_v$. Let us keep using $L_v$ for the resulting set, to avoid introducing additional notation. Recall Figure~\ref{fig:sthe}.

	We now use an argument similar to the one for the standard lists in Section~\ref{sec:stst}. We consider a $|L_v| \times k$ matrix $M= (M[i,j])_{i,j}$ defined as follows. If $\ell_i$ and $r_j$ intersect in $J_v$, then $M[i,j]$ is the area of the grounded triangle with apex $\ell_i\cap r_j$ and sides containing $\ell_i$ and $r_j$. If $\ell_i$ and $r_j$ do not intersect in $J_v$, then $M[i,j]=j\varepsilon$ for an infinitesimal $\varepsilon>0$. This finishes the description of $M$; there is no need to consider different cases because $r_k$ crosses all remaining segments of $L_v$. At each row of $M$, we always have some areas of triangles at the right side because $r_k$ intersects inside $J_v$ all remaining segments of $L_v$. In the example of Figure~\ref{fig:sthe}, if we denote by $\alpha(\ell_i,r_j)$ the area of the triangle with apex $\ell_i\cap r_j$, when the intersection exists, then we have
	\[
		M~=~ \left[ \begin{matrix}
				\alpha(\ell_1,r_1) & \alpha(\ell_1,r_2) & \alpha(\ell_1,r_3)\\
				\alpha(\ell_2,r_1) & \alpha(\ell_2,r_2) & \alpha(\ell_2,r_3)\\
				\varepsilon & \alpha(\ell_3,r_2) & \alpha(\ell_3,r_3)\\
				\varepsilon & 2\varepsilon & \alpha(\ell_4,r_3)
			\end{matrix}
		\right] \, .
	\] 

	The same argument that was used to prove Lemma~\ref{lem:monotone_result} shows that this matrix $M$ is also totally monotone. 

    \begin{lemma}
        Consider indices $i,i',j,j'$ such that $1\le i< i'\le |L_v|$ and $1\le j< j'\le k$. If $M[i',j]> M[i',j']$, then $M[i,j]> M[i,j']$.
    \end{lemma}
	\begin{proof}
	The same arguments as in the proof of Lemma~\ref{lem:monotone_result} work. One only needs to notice that, whenever $M[i',j]$ is defined by an area, because $\ell_{i'}$ and $r_j$ intersect, then Lemma~\ref{lem:structure} implies that, for all $j'>j$, the segment $\ell_{i'}$ also intersects $r_{j'}$ inside $J_v$ and the segment $\ell_i$ intersects both $r_j$ and $r_j'$ inside $J_v$. That is, when $M[i',j]$ is defined by an area, then $M[i,j]$ is defined by an area for each $i\le i'$ and each $j'\ge j$. This implies that the crossings used in the proof of Lemma~\ref{lem:monotone_result} exist also in this setting, and thus we can again use Lemma~\ref{lem:monotone_result_crossing}.
	\end{proof}
	
	Using again the SMAWK algorithm of Aggarwal et al.~\cite{aggarwal1987}, we can find a largest area grounded triangle with apices given by the members of the sorted lists $L_v$ and $R_v^h$ in $O(|L_v| + |R_v^h|)$ time. We can also handle the interaction between $R_v$ and $L_v^h$ in a similar fashion. This finishes the description of the interaction between a standard and a hereditary list in a node $v$.

\subsection{Putting things together}
\label{sec:together}
	At each node $v$ of $\mathcal{T}$, we handle the interactions between the standard lists $L_v,R_v$, as discussed in Section~\ref{sec:stst}, and twice the interactions between a standard list and a hereditary list ($L_v,R^h_v$ is one group, and $L^h_v,R_v$ is another group), as discussed in Section~\ref{sec:sthe}. Taking the maximum over all nodes of $\mathcal{T}$, we find an optimal triangle whose apex lies in $A$ because of Lemma~\ref{lem:union}. At each node $v$ we spend $O(|L_v| + |R_v| + |L_v^h| + |R_v^h|)$, assuming that the lists are already sorted.

    To get the lists sorted at each node, we use the same technique that Chazelle et.~al~\cite[Section 3.1]{Chazelle1994} used to improve their running time. We define the following binary relation $\prec$ on the segments in $L$: for any two segments $\ell,\ell'\in L$, we have $\ell\prec \ell'$ if and only if there is a vertical line that intersects the interior of $\ell$ and $\ell'$ and, along that vertical line, $\ell$ is immediately below $\ell'$ (there are no other segments of $L$ in between). Using a left-to-right sweep-line algorithm we can compute in $O(|L| \log |L|)$ time the relations in $\prec$. Moreover, we can extend in $O(|L|)$ time this relation $\prec$ to a total order on $L$ using a topological sort. 
    Once this total order for $L$ is computed at the root of the hereditary segment tree, it can be passed to its descendants in time proportional to the lengths of the lists. The restriction of this order to any node $v$ of $\mathcal{T}$ gives the desired order for $L_v$, $L^h_v$.
A similar computation is done for $R$ separately.

    \begin{theorem}
        \label{thm:nlogn}
        A triangle of largest area inside a terrain with $n$ vertices can be found in $O(n \log n)$ time.
    \end{theorem}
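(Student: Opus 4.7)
The plan is to assemble the ingredients developed in the previous subsections into a single algorithm and simply account for the running time of each piece. By Lemma~\ref{lem:structure1} it suffices to look for a grounded triangle of largest area, and by part (b) of that lemma there are only two disjoint sub-cases to consider: either the apex lies on the boundary of the terrain, or each of the two sides incident to the apex contains two vertices of the terrain.

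The first sub-case is disposed of directly by invoking Lemma~\ref{lem:boundary} in $O(n\log n)$ time. For the second, Lemma~\ref{lem:apex} restricts the apex to the set $A=\{\ell\cap r\mid \ell\in L,\, r\in R\}$. First I would compute $L$ and $R$ in $O(n)$ time using Lemma~\ref{lem:LR}, and then build the hereditary segment tree $\mathcal{T}$ on the atomic intervals induced by the endpoints of $L\cup R$; by \eqref{eq:boundlists} the total size of the standard and hereditary lists across all nodes of $\mathcal{T}$ is $O(n\log n)$. By Lemma~\ref{lem:union} every point of $A$ belongs to some $A_v$, so it is enough to extract the best apex in $A_v$ at each node $v$ independently.

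At each node $v$ I would run the three per-node routines from the preceding subsections, one for each of the pairs $(L_v,R_v)$, $(L_v^h,R_v)$ and $(L_v,R_v^h)$. In every case the implicit matrix encoding the areas of candidate grounded triangles is totally monotone: for the standard-standard pair this is Lemma~\ref{lem:monotone_result}, and for the standard-hereditary pairs it follows from an analogous geometric argument supported by Observations~\ref{obs:no_end_point_inside} and \ref{obs:above_top}. The SMAWK algorithm of Aggarwal et al.~\cite{aggarwal1987} then locates the row maxima in time linear in $|L_v|+|R_v|+|L_v^h|+|R_v^h|$. Summing over all nodes gives $O(n\log n)$ by \eqref{eq:boundlists}, and returning the larger of the two sub-case answers produces a largest-area triangle inside the terrain.

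The step I expect to be the main obstacle is not any of the geometric or combinatorial arguments but rather ensuring that the per-node lists are presented to SMAWK in sorted form without incurring an extra logarithmic factor, which would bump the whole algorithm to $O(n\log^2 n)$. Following the trick of Chazelle et al.~\cite{Chazelle1994}, I would compute a single total order on $L$ (and on $R$) once, via the partial order ``$\ell$ lies above $\ell'$ at any common $x$-coordinate, else $\ell$ lies to the left of $\ell'$'', which by Lemma~\ref{lem:LR} is acyclic and hence extends to a total order by a sweepline followed by topological sort in $O(n\log n)$ time; this global order is then pushed down the tree so that each node $v$ obtains sorted versions of its four lists in time proportional to their sizes, preserving the overall $O(n\log n)$ budget.
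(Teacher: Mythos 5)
Your proposal is correct and follows essentially the same route as the paper: reduce to the grounded case via Lemma~\ref{lem:structure1}, handle the boundary-apex case with Lemma~\ref{lem:boundary}, restrict the remaining apices to $A$ via Lemma~\ref{lem:apex}, decompose $A$ over the hereditary segment tree (Lemma~\ref{lem:union}), apply SMAWK to the totally monotone matrices at each node, and amortize the sorting by computing one global above/below total order on $L$ and on $R$ and passing it down the tree. The accounting via \eqref{eq:boundlists} matches the paper's proof exactly.
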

    \begin{proof}
        The computation of the total order extending the above-below relation takes $O(n \log n)$ for $L$ and for $R$. After this,
        we can pass the sorted lists to each child in time proportional to the size of the lists. Thus, we spend additional  $O(|L_v|+|R_v|+|L^h_v|+R^h_v|)$ time per node $v$ of the hereditary tree to get the sorted lists.
        
        Once the lists at each node $v$ of the hereditary tree are sorted, we spend $O(|L_v|+|R_v|+|L^h_v|+R^h_v|)$ time to handle the apices of $A_v$, as explained above. Using the bound in equation \eqref{eq:boundlists}, the total time over all nodes together is $O(n\log n)$.
    \end{proof}
    
\section{Conclusions}
	We have provided an algorithm to find a triangle of largest area contained in a terrain described by $n$ vertices in $O(n \log n)$ time. The obvious open problem is whether the problem can be solved in linear time. It would also be interesting to identify other classes of polygons where the largest-area triangle can be computed in near-linear time.

\paragraph{Funding.}
Funded in part by the Slovenian Research and Innovation Agency (P1-0297, J1-9109, J1-1693, J1-2452, N1-0218).
Funded in part by the European Union (ERC, KARST, 101071836). Views and opinions expressed are however those of the authors only and do not necessarily reflect those of the European Union or the European Research Council. Neither the European Union nor the granting authority can be held responsible for them.

\bibliographystyle{abbrv}
\bibliography{Biblio}
\end{document}